\newcommand{\ignore}[1]{}
\newcommand{\Inttwo}{\mathbb{Z}_2}
\newcommand{\Zp}{\mathbb{Z}_p}
\newcommand{\R}{\mathbb{R}}
\newcommand{\matr}[1]{\mathbf{#1}}
\newcommand{\vect}[1]{\mathbf{#1}}
\newcommand{\code}[1]{\mathcal{#1}}
\newcommand{\set}[1]{\mathcal{#1}}
\newcommand{\GF}[1]{\mathbb{F}_{#1}}
\newcommand{\GFfour}{\GF{4}}
\newcommand{\defeq}{\triangleq}
\newcommand{\oomega}{\overline{\omega}}
\newcommand{\vb}{\vect{b}}
\newcommand{\vm}{\vect{m}}
\newcommand{\vt}{\vect{t}}
\newcommand{\ox}{\overline{x}}
\newcommand{\tr}{\mathrm{T}}
\newcommand{\innerprod}[2]{\langle #1, #2 \rangle}
\newcommand{\nset}[1]{\mathsf{#1}}
\newcommand{\perps}{{\perp}}
\newtheorem{Lemma}{Lemma}
\newtheorem{Theorem}[Lemma]{Theorem}
\newtheorem{Proposition}[Lemma]{Proposition}
\newtheorem{Corollary}[Lemma]{Corollary}
\newtheorem{Assumption}[Lemma]{Assumption}
\theoremstyle{plain}
\newtheorem{PreDefinition}[Lemma]{{\textbf{Definition}}}
  \newenvironment{Definition}%
    {\begin{PreDefinition}}{\hfill$\square$\end{PreDefinition}}
\theoremstyle{plain}
\newtheorem{PreRemark}[Lemma]{{\textbf{Remark}}}
    {\begin{PreRemark}\upshape}{\hfill$\square$\end{PreRemark}}
\newtheorem{PreExample}[Lemma]{{\textbf{Example}}}
  \newenvironment{Example}%
    {\begin{PreExample}\upshape}{\hfill$\square$\end{PreExample}}
\noindent \emph{Proof:}}{\hfill$\square$}
\begin{document}

\title{Stabilizer Quantum Codes: A Unified View \\
       based on Forney-style Factor Graphs}

\author{%
        \authorblockN{Pascal O. Vontobel}
        \authorblockA{Hewlett--Packard Laboratories\\
                      Palo Alto, CA 94304, USA\\
                      Email: pascal.vontobel@ieee.org}
}

\maketitle

\begin{abstract}
  Quantum error-correction codes (QECCs) are a vital ingredient of quantum
  computation and communication systems. In that context it is highly
  desirable to design QECCs that can be represented by graphical models which
  possess a structure that enables efficient and close-to-optimal iterative
  decoding.

  In this paper we focus on stabilizer QECCs, a class of QECCs whose
  construction is rendered non-trivial by the fact that the stabilizer label
  code, a code that is associated with a stabilizer QECC, has to satisfy a
  certain self-orthogonality condition. In order to design graphical models of
  stabilizer label codes that satisfy this condition, we extend a duality
  result for Forney-style factor graphs (FFGs) to the stabilizer label code
  framework. This allows us to formulate a simple FFG design rule for
  constructing stabilizer label codes, a design rule that unifies several
  earlier stabilizer label code constructions.
\end{abstract}

\section{Introduction}
\label{sec:introduction:1}

Graphical models have played a very important role in the recent history of
error-correction coding (ECC) schemes for conventional channel and storage
setups~\cite{Kschischang:Frey:Loeliger:01}. In particular, some of the most
powerful ECC schemes known today, like message-passing iterative (MPI)
decoding of low-density parity-check (LDPC) and turbo codes, can be
represented by graphical models. It is therefore highly desirable to extend
the design and analysis lessons that have been learned from these ECC systems
to quantum error-correction code (QECC) systems, in particular to stabilizer
QECC systems.

For background material and a history of stabilizer QECCs in particular, and
quantum information processing (QIP) in general, we refer to the excellent
textbook by Nielsen and Chuang~\cite{Nielsen:Chuang:00:1}. Alternatively, one
can consult some early papers on stabilizer QECCs, e.g.~\cite{Gottesman:97:1,
  Calderbank:Rains:Shor:Sloane:98:1}, or more recent accounts,
e.g.~\cite{MacKay:Mitchison:McFadden:04:1, Forney:Grassl:Guha:07:1,
  Poulin:Tillich:Ollivier:07:1:subm}. The aim of the present paper is to
introduce a Forney-style factor graph (FFG) framework that allows one to
construct FFGs that represent interesting classes of stabilizer QECCs, more
precisely, that represent interesting classes of stabilizer label codes and
normalizer label codes.  Anyone familiar with the basics of stabilizer QECCs
can then easily formulate the corresponding stabilizer QECCs. (Note that due
to space constraints this paper does not motivate stabilizer QECCs and does
not define them. However, the paper does not use any QIP jargon and should
therefore be accessible to anyone familiar with the basics of coding theory.)

This paper is structured as follows. In Section~\ref{sec:ffgs:1} we introduce
the basics on FFGs and in Section~\ref{sec:dualizing:ffgs:1} we extend a
well-known duality result for FFGs. In
Section~\ref{sec:stabilizer:and:normalizer:label:codes:1} we then show how
this duality result can be used to construct stabilizer and normalizer label
codes. Afterwards, in Section~\ref{sec:examples:1} we discuss several examples
of such codes, in particular we show that our FFG framework unifies earlier
proposed code constructions. We conclude by briefly commenting on
message-passing iterative decoding and linear programming decoding in
Section~\ref{sec:mpi:and:lp:decoding:1}. Because of space limitations we
decided to formulate many of the concepts and results in terms of examples;
most of them can be suitably generalized.

Our notation is quite standard. In particular, the field of real numbers will
be denoted by $\R$ and the ring of integers modulo $p$ by $\Zp$. (If $p$ is a
prime, then $\Zp$ is a field.) The Galois field $\GFfour$ will be based on the
set $\{ 0, 1, \omega, \omega^2 \}$, where $\omega$ satisfies $\omega^2 =
\omega + 1$ (and therefore also $\omega^3 = 1$), and where conjugation is
defined by $\ox = x^2$. Moreover, for any statement $S$ we will use Iverson's
convention which says that $[S] = 1$ if $S$ is true and $[S] = 0$ otherwise.

\section{Forney-style Factor Graphs (FFGs)}
\label{sec:ffgs:1}

\begin{figure}
  \begin{center}
    \epsfig{file=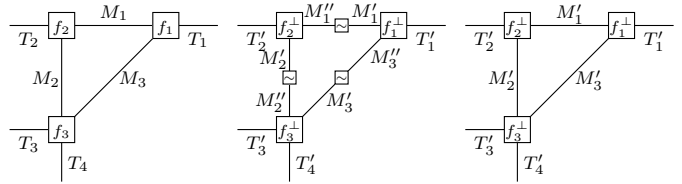, width=\linewidth}
  \end{center}
  \caption{Left: A simple FFG. Middle: Dual of the FFG on the left. Right:
    Dual of the FFG on the left if alphabet groups have characteristic $2$.}
  \label{fig:simple:ffg:1}
\end{figure}

FFGs~\cite{Forney:01:1, Loeliger:04:1, Kschischang:Frey:Loeliger:01}, also
known as normal factor graphs, are graphs that represent multivariate
functions. For example, let $\nset{T} = 4$ and $\nset{M} = 3$, let
$\set{T}_i$, $i = 1, \ldots, \nset{T}$ and $\set{M}_i$, $i = 1, \ldots,
\nset{M}$ be some arbitrary alphabets, and consider the function $f: \
\prod_{i=1}^{\nset{T}} \set{T}_i \times \prod_{i=1}^{\nset{M}} \set{M}_i \ \to
\ \mathbb{R}$ that represents the mapping
\begin{align*}
    &(\vt, \vm)
       \mapsto  
       f_1(t_1, m_1, m_3) \,
       f_2(t_2, m_1, m_2) \,
       f_3(t_3, t_4, m_2, m_3).
\end{align*}
Here, $f$ is called the global function and is the product of $\nset{F}$
functions $f_i$, $i = 1, \ldots, \nset{F}$ (here $\nset{F} = 3$), which are
called local functions. Whereas the argument set of the function $f$
encompasses $t_1$, $t_2$, $t_3$, $t_4$, $m_1$, $m_2$, and $m_3$, the function
$f_1$ has only $t_1$, $m_1$, and $m_3$ as arguments, $f_2$ has only $t_2$,
$m_1$, and $m_2$ as arguments, and $f_3$ has only $t_3$, $t_4$, $m_2$, and
$m_3$ as arguments.  Graphically, we represent this function decomposition as
follows, cf.~Fig.~\ref{fig:simple:ffg:1}~(left):
\begin{itemize}

\item For each local function we draw a function node (vertex).

\item For each variable we draw an half-edge or an edge.

\item If a variable appears as an argument in only one local function then we
  draw an half-edge that is connected to that local function. If a variable
  appears as an argument in two local functions then we draw an edge that
  connects these two local functions.\footnote{Global functions that contain
    variables that appear in more than two local functions can always replaced
    by essentially equivalent global functions where all variables appear as
    an argument in at most two local functions. E.g., $f(x_1, x_2, x_3, x_4) =
    f_1(x_1, x_2) \cdot f_2(x_1, x_3) \cdot f_3(x_1, x_4)$ can be replaced by
    the essentially equivalent $f'(x'_1, x''_1, x'''_1, x_2, x_3, x_4) =
    f_1(x'_1, x_2) \cdot f_2(x''_1, x_3) \cdot f_3(x'''_1, x_4) \cdot [x'_1 =
    x'_2 = x'_3]$. With this, $f$ being ``essentially equivalent'' to $f'$
    means that whenever $f(x'_1, x''_1, x'''_1, x_2, x_3, x_4)$ is nonzero
    then $f(x_1, x_2, x_3, x_4) = f(x'_1, x''_1, x'''_1, x_2, x_3, x_4)$ with
    $x_1 = x'_1 = x''_1 = x'''_1$.}

\end{itemize}

In our example the variables that are associated with half-edges are labeled
$T_i$, $i = 1, \ldots, \nset{T}$, whereas the variables that are associated
with edges are labeled $M_i$, $i = 1, \ldots, \nset{M}$. This distinction of
variable labels will be very helpful later on when we will dualize the global
function.

Interesting are global functions where the local function argument sets are
strict subsets of the global function argument set: the fewer arguments appear
in the local functions, the sparser the corresponding FFG will be.

Consider again the FFG in Figure~\ref{fig:simple:ffg:1}~(left) and let the
local functions represent indicator functions, i.e.,
\begin{align*}
  f_1(t_1, m_1, m_3)
    &\defeq
       \big[
         (t_1, m_1, m_3) \in \code{C}_1
       \big], \\
  f_2(t_2, m_1, m_2)
    &\defeq
       \big[
         (t_2, m_1, m_2) \in \code{C}_2
       \big], \\
  f_3(t_3, t_4, m_2, m_3)
    &\defeq
       \big[
         (t_3, t_4, m_2, m_3) \in \code{C}_3
       \big],
\end{align*}
for some codes $\code{C}_1 \subseteq \set{T}_1 \times \set{M}_1 \times
\set{M}_3$, $\code{C}_2 \subseteq \set{T}_2 \times \set{M}_1 \times
\set{M}_2$, and $\code{C}_3 \subseteq \set{T}_3 \times \set{T}_4 \times
\set{M}_2 \times \set{M}_3$. The restriction of the resulting global function
to the variables $\vt$, i.e., to the variables that are associated with the
half-edges, yields the function $[ \vt \in \code{C} ]$ with
\begin{align*}
  \code{C}
    = \left\{
          \vt \in \prod_{i=1}^{\nset{T}} \set{T}_i
        \ \left| \
          \begin{array}{c}
            \text{there exists an $\vm$} \\
            \text{such that $f(\vt, \vm) = 1$}
          \end{array}
        \right.
      \right\} \; .
\end{align*}
Clearly, if the sets $\set{T}_i$, $i = 1, \ldots, \nset{T}$, and $\set{M}_i$,
$i = 1, \ldots, \nset{M}$, are groups and the codes $\code{C}_i$, $i = 1,
\ldots, \nset{F}$ are group codes then $\code{C}$ is a group code.\footnote{A
  code is a group code if it is a subgroup of the direct product of the symbol
  alphabet groups. Note that a group code can be defined as the span of a list
  of suitably chosen vectors. Considering the group operation as ``addition'',
  group codes can also be seen to be additive codes, i.e., codes that are
  closed under addition.}

\section{Dualizing FFGs}
\label{sec:dualizing:ffgs:1}

In the context of stabilizer QECCs, dual codes (under the symplectic inner
product) play a very important role. The aim of this section is to start with
an FFG that represents the indicator function of some code and to construct an
FFG that represents the indicator function of the dual of that code. To that
end we will heavily use insights from~\cite[Section~VII]{Forney:01:1} on
Pontryagin duality theory in the context of FFGs, notably one of relatively
few results that hold for graphical models \emph{without} and \emph{with}
cycles.

\newpage

For the rest of the paper, we make the following definitions.

\begin{Definition}
  \label{def:qsc:ring:def:1} 

  Let $p$ be some prime. 

  For $i = 1, \ldots, \nset{T}$:

  \begin{itemize}

  \item We define $\set{T}_i$ to be the group $\Zp^2$ with vector addition
    modulo $p$ and we denote elements of $\set{T}_i$ by $t_i = (t_{X,i},
    t_{Z,i})$.

  \item We let $\set{T}'_i$ be the character group of $\set{T}_i$. Because
    $\set{T}'_i$ turns out to be isomorphic to $\set{T}_i$, we identify
    $\set{T}'_i$ with $\set{T}_i$. Elements of $\set{T}'_i$ will be denoted by
    $t'_i = (t'_{X,i}, t'_{Z,i})$. 

  \item We define the inner product $\innerprod{t_i}{t'_i}: \ \set{T}_i \times
    \set{T}'_i \to \Zp$ to be the symplectic inner product, i.e.,
    \begin{align*}
      \innerprod{t_i}{t'_i}
        &\defeq
           \begin{bmatrix}
             t_{X,i} \\
             t_{Z,i}
           \end{bmatrix}^\tr \!\!\!
           \cdot
           \begin{bmatrix}
             0 & 1 \\
             1 & 0
           \end{bmatrix}
          \cdot
          \begin{bmatrix}
             t'_{X,i} \\
             t'_{Z,i}
           \end{bmatrix}
         = t_{X,i} t'_{Z,i}
           +
           t_{Z,i} t'_{X,i} \, ,
    \end{align*}
    where ${}^\tr$ represents vector transposition and where addition and
    multiplication are modulo $p$. (Note that we are using angular brackets to
    denote inner products. This is in contrast to~\cite{Forney:01:1} that used
    angular brackets to denote pairings, which are exponential functions of
    inner products.)

  \end{itemize}

  For $i = 1, \ldots, \nset{M}$:

  \begin{itemize}

  \item We let $\mu_i$ be some positive integer, we define $\set{M}_i$ to be
    the group $(\Zp^2)^{\mu_i}$ with vector addition modulo $p$, and we denote
    elements of $\set{M}_i$ by $m_i = \bigl( (m_{X,i,1}, m_{Z,i,1}), \ldots,
    (m_{X,i,\mu_i}, m_{Z,i,\mu_i}) \bigr)$.

  \item We let $\set{M}'_i$ denote the character group of $\set{M}_i$. Again,
    because $\set{M}'_i$ turns out to be isomorphic to $\set{M}_i$, we
    identify $\set{M}'_i$ with $\set{M}_i$. Elements of $\set{M}'_i$ will be
    denoted by $m'_i = \bigl( (m'_{X,i,1}, m'_{Z,i,1}), \ldots,
    (m'_{X,i,\mu_i}, m'_{Z,i,\mu_i}) \bigr)$.

  \item We define the inner product $\innerprod{m_i}{m'_i}$ to be the
    symplectic inner product, i.e.,
    \begin{align*}
      %\!\!\!\!\!\!\!\!\!
      &\innerprod{m_i}{m'_i} \\
        &\defeq
           \begin{bmatrix}
             m_{X,i,1} \\ m_{Z,i,1} \\
             \vdots \\
             m_{X,i,\mu_i} \\ m_{Z,i,\mu_i}
           \end{bmatrix}^\tr \!\!\!
           \cdot
           \begin{bmatrix}
             0 & 1 & & & \\
             1 & 0 & & & \\
               &   & \ddots & & & \\ 
               &   & & 0 & 1 \\
               &   & & 1 & 0
           \end{bmatrix}
           \cdot
           \begin{bmatrix}
             m'_{X,i,1} \\ m'_{Z,i,1} \\
             \vdots \\
             m'_{X,i,\mu_i} \\ m'_{Z,i,\mu_i}
           \end{bmatrix} \\
        &= \sum_{h=1'}^{\mu_i}
             \big(
               m_{X,i,h} m'_{Z,i,h}
               +
               m_{Z,i,h} m'_{X,i,h}
             \big) \; .
    \end{align*}

  \end{itemize}

  The above inner products induce inner products on vectors, e.g.,
  $\innerprod{\vt}{\vt'}:\ \prod_{i=1}^{\nset{T}} \set{T}_i \times
  \prod_{i=1}^{\nset{T}} \set{T}'_i \to \Zp$ is the inner product defined by
  $\innerprod{\vt}{\vt'} \defeq \sum_{i=1}^{\nset{T}} \innerprod{t_i}{t'_i}$,
  etc..

  Moreover, all codes are assumed to be group codes.
\end{Definition}

In the following, because of the natural isomorphism of the groups
$(\Zp^2)^n$ and $(\Zp^n)^2$, a vector $\vt \in (\Zp^2)^n$ will not
only be written as
\begin{align*}
  \vt
    &= \big( (t_{X,1},t_{Z,1}), \ldots, (t_{X,n},t_{Z,n}) \big)
\end{align*}
but also as
\begin{align*}
  \vt
     = (\vt_X, \vt_Z) \ \text{ with } \ 
 \vt_X
    &= (t_{X,1}, \ldots, t_{X,n}) \in \Zp^n \; , \\
 \vt_Z
    &= (t_{Z,1}, \ldots, t_{Z,n}) \in \Zp^n \; .
\end{align*}
With this convention, the symplectic inner product of $\vt$ and $\vt'$ can be
written as
\begin{align*}
  \innerprod{\vt}{\vt'}
    &\defeq
       \begin{bmatrix}
         t_{X,1} \\ t_{Z,1} \\
         \vdots \\
         t_{X,n} \\ t_{Z,n}
       \end{bmatrix}^\tr \!\!\!
       \cdot
       \begin{bmatrix}
         0 & 1 & & & \\
         1 & 0 & & & \\
           &   & \ddots & & & \\ 
           &   & & 0 & 1 \\
           &   & & 1 & 0
       \end{bmatrix}
       \cdot
       \begin{bmatrix}
         t'_{X,1} \\ t'_{Z,1} \\
         \vdots \\
         t'_{X,n} \\ t'_{Z,n}
       \end{bmatrix}
\end{align*}
or as
\begin{align*}
  \innerprod{\vt}{\vt'}
    &= \begin{bmatrix}
         \vt_X^\tr \\ \vt_Z^\tr
       \end{bmatrix}^\tr \!\!\!
       \cdot
       \begin{bmatrix}
         \matr{0}     & \mathbb{1}_n \\
         \mathbb{1}_n & \matr{0}
       \end{bmatrix}
       \cdot
       \begin{bmatrix}
         (\vt'_X)^\tr \\ (\vt'_Z)^\tr
       \end{bmatrix} \\
    &= \vt_X \cdot (\vt'_Z)^\tr
       +
       \vt_Z \cdot (\vt'_X)^\tr \; ,
\end{align*}
where $\mathbb{1}_n$ is the $n \times n$ identity matrix. Similar expressions
will also be used for the vector $\vm$ and combinations of $\vt$ and $\vm$.

\begin{Definition}
  \label{def:dual:code:1}
  
  The dual code $\code{C}^\perps$ (under the symplectic inner product) of some
  group code $\code{C} \subseteq \prod_{i=1}^{\nset{T}} \set{T}_i$ is defined
  to be
  \begin{align*}
    \code{C}^\perps
      &\defeq
         \left\{
           \left.
           \vt' \in \prod_{i=1}^{\nset{T}} \set{T}'_i
           \ \right| \
           \innerprod{\vt}{\vt'} = 0
         \right\} \; .
  \end{align*}
\end{Definition}

Note that $\code{C}^\perps$ is also a group code and that
$(\code{C}^\perp)^\perp = \code{C}$. Similarly, for any $i = 1, \ldots,
\nset{T}$, because $\code{C}_i$ was assumed to be a group code, we can define
its dual $\code{C}_i^\perps$.

In the following, we want to show that there is an FFG representing
$\code{C}^\perps$ that is tightly related to the FFG that represents
$\code{C}$. Continuing our example from Section~\ref{sec:ffgs:1}, let
$\set{M}''_i \defeq \set{M}'_i$ for $i = 1, \ldots, \nset{M}$, and let
$f^\perp: \ \prod_{i=1}^{\nset{T}} \set{T}'_i \times \prod_{i=1}^{\nset{M}}
\set{M}'_i \times \prod_{i=1}^{\nset{M}} \set{M}''_i \ \to \ \mathbb{R}$ be
the function that represents the mapping
\begin{align*}
    (\vt', \vm', \vm'')
     \mapsto  
    &f^\perp_1(t'_1, m'_1, m''_3) \ 
       f^\perp_2(t'_2, m''_1, m'_2) \\
    &  \cdot \,
       f^\perp_3(t'_3, t'_4, m''_2, m'_3) \\
    &  \cdot \,
       \big[ m'_1 {=} - m''_1 \big] \,
       \big[ m'_2 {=} - m''_2 \big] \,
       \big[ m'_3 {=} - m''_3 \big]
\end{align*}
with
\begin{align*}
  f^\perp_1(t'_1, m'_1, m''_3)
    &\defeq
       \big[
         (t'_1, m'_1, m''_3) \in \code{C}^\perps_1
       \big] \; , \\
  f^\perp_2(t'_2, m''_1, m'_2)
    &\defeq
       \big[
         (t'_2, m''_1, m'_2) \in \code{C}^\perps_2
       \big] \; , \\
  f^\perp_3(t'_3, t'_4, m''_2, m'_3)
    &\defeq
       \big[
         (t'_3, t'_4, m''_2, m'_3) \in \code{C}^\perps_3
       \big] \; .
\end{align*}
The function $f^\perp$ is depicted by the FFG in
Figure~\ref{fig:simple:ffg:1}~(middle) where the function nodes with a tilde
in them represent the indicator functions $[m'_1 {=} - m''_1]$, $[m'_2 {=} -
m''_2]$, and $[m'_3 {=} - m''_3]$, respectively. With this, we can
follow~\cite{Forney:01:1} and establish the next theorem.

\begin{Theorem}
  \label{th:code:duality:1}

  With the above definitions,
  \begin{align*}
    \code{C}^\perps 
      &= \left\{
           \left.
             \vt' \in \prod_{i=1}^{\nset{T}} \set{T}'_i
           \ \right| \
           \begin{array}{c}
             \text{there exists $\vm'$ and $\vm''$} \\
             \text{such that $f^\perp(\vt', \vm', \vm'') = 1$}
           \end{array}
         \right\} \; .
  \end{align*}
\end{Theorem}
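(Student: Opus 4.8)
The plan is to read the asserted identity as the statement that the $\vt'$-marginal of $f^\perp$ — the set of $\vt'$ for which some $(\vm', \vm'')$ makes $f^\perp(\vt', \vm', \vm'') = 1$ — coincides with $\code{C}^\perps$ as in Definition~\ref{def:dual:code:1}. I would prove the two inclusions separately. The inclusion ``$\supseteq$'' is elementary and uses only bilinearity of the symplectic inner product together with the orthogonality $\innerprod{x_i}{x'_i} = 0$ built into each pair $\code{C}_i$, $\code{C}_i^\perps$; the reverse inclusion is the substantive part, and I would obtain it from Pontryagin duality following \cite[Section~VII]{Forney:01:1}, or, as a lower-tech alternative, from a dimension count.

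For ``$\supseteq$'', suppose $\vt'$ lies in the $\vt'$-marginal of $f^\perp$, witnessed by $\vm', \vm''$ with $m'_i = -m''_i$ for each $i$ and with each local dual constraint satisfied, e.g.\ $(t'_1, m'_1, m''_3) \in \code{C}_1^\perps$, and analogously for $\code{C}_2^\perps$, $\code{C}_3^\perps$. Let $\vt \in \code{C}$ be arbitrary, witnessed by some internal $\vm$ with $(t_1, m_1, m_3) \in \code{C}_1$, etc. For each local index $i$, orthogonality of $\code{C}_i$ and $\code{C}_i^\perps$ gives one scalar equation; summing these three equations, the half-edge terms combine to $\innerprod{\vt}{\vt'}$, while the internal terms group, for each edge $i$, into $\innerprod{m_i}{m'_i} + \innerprod{m_i}{m''_i} = \innerprod{m_i}{m'_i + m''_i} = 0$, precisely because $m'_i + m''_i = 0$. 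Hence $\innerprod{\vt}{\vt'} = 0$ for every $\vt \in \code{C}$, i.e.\ $\vt' \in \code{C}^\perps$. Note that this pairwise cancellation is exactly what forces the sign-inverter constraint $[m'_i {=} -m''_i]$ rather than a plain equality on the internal edges.

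For equality I would invoke the Fourier/Pontryagin argument. The one fact doing the work is character orthogonality: for the pairing $\chi_{\vt'}(\vt)$ (an exponential of $\innerprod{\vt}{\vt'}$, cf.\ Definition~\ref{def:qsc:ring:def:1}) one has $\sum_{\vt \in \code{D}} \chi_{\vt'}(\vt) = \card{\code{D}} \cdot [\vt' \in \code{D}^\perps]$ for any group code $\code{D}$, so Fourier transformation carries the indicator of a group code to a positive multiple of the indicator of its symplectic dual. Applied to each local code this turns $f_i$ into a multiple of $f_i^\perp$; applied to the implicit equality constraint carried by a shared internal variable $m_i$ it produces the zero-sum constraint, since $\{(m,m)\}^\perps = \{(m, -m)\}$ by nondegeneracy of the symplectic form. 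Transforming the normal graph node-by-node therefore replaces $f$ by $f^\perp$ up to a positive constant, and replaces the $\vt$-marginal of $f$, which is a positive multiple of $[\vt \in \code{C}]$, by the $\vt'$-marginal of $f^\perp$; since the latter is simultaneously a positive multiple of $[\vt' \in \code{C}^\perps]$ by the global application of character orthogonality, the two marginals have the same support, which is the claim. Alternatively, equality follows from ``$\supseteq$'' and the dimension identity $\dim_{\Zp} \code{C} + \dim_{\Zp} \code{C}^\perps = 2\nset{T}$ once one checks that the $\vt'$-marginal of $f^\perp$ has $\Zp$-dimension $2\nset{T} - \dim_{\Zp} \code{C}$.

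The main obstacle is precisely this upgrade from inclusion to equality. In the Fourier route the delicate point is justifying that transformation distributes across both the product over local functions and the marginalization over internal variables — here normality of the graph, i.e.\ that every internal variable is incident to exactly two function nodes, is essential — and then tracking the accumulated factors of $\card{\code{C}_i}$ and powers of $p$ carefully enough to conclude that the resulting $\vt'$-marginal is a genuine $\{0,1\}$-valued indicator and not merely proportional to one. In the dimension-count route the analogous difficulty is relating the dimension of a projection of the dual behavior to that of $\code{C}$. A final point worth flagging is the role of the characteristic: the sign in $m'_i = -m''_i$ is forced in odd characteristic but collapses to equality when $p = 2$, which is the source of the distinction between the middle and right FFGs in Figure~\ref{fig:simple:ffg:1}.
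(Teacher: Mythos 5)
Your proposal is correct and, for the direction the paper actually writes out, coincides with the paper's own argument: the printed proof is exactly your ``$\supseteq$'' computation --- take $\vt'$ in the $\vt'$-marginal of $f^\perp$ and an arbitrary $\vt \in \code{C}$ with internal witness $\vm$, add and subtract the internal inner products, regroup them node by node so that each local orthogonality between $\code{C}_i$ and $\code{C}_i^{\perps}$ annihilates one group, the cancellation $\innerprod{m_i}{m'_i} + \innerprod{m_i}{m''_i} = \innerprod{m_i}{m'_i + m''_i} = 0$ being supplied by the sign-inverter constraint $m''_i = -m'_i$ --- concluding $\vt' \in \code{C}^{\perps}$. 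Where you go beyond the paper is the converse inclusion $\code{C}^{\perps} \subseteq \{\vt' \mid \exists\, \vm', \vm'' \text{ with } f^\perp(\vt', \vm', \vm'') = 1\}$: the paper's proof stops after the easy direction and delegates the rest wholesale to the phrase ``we can follow \cite{Forney:01:1},'' i.e.\ to Forney's normal-graph duality theorem, whereas you sketch its content explicitly --- character orthogonality $\sum_{\vt \in \code{D}} \chi_{\vt'}(\vt) = \card{\code{D}} \cdot [\vt' \in \code{D}^{\perps}]$ turning each local indicator into (a positive multiple of) the dual indicator, the identity $\{(m,m)\}^{\perps} = \{(m,-m)\}$ explaining why equality constraints dualize to sign inverters, and normality of the graph licensing the node-by-node transform. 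Your worry about tracking the accumulated constants is correctly self-defused by your own remark: $f^\perp$ is a product of indicators, hence $\{0,1\}$-valued, so only the support of the marginal matters. In short, you take the same route as the paper but execute it more completely; the only caveat is that your hard direction remains a sketch (as does the paper's, which offers it solely by citation to \cite[Section~VII]{Forney:01:1}), and your dimension-count alternative is honestly flagged as unfinished rather than claimed.
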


\begin{proof}
  (The proof is for the example code in Figure~\ref{fig:simple:ffg:1}, but the
  proof can easily be generalized.) Let $\vt' \in \prod_{i=1}^{\nset{T}}
  \set{T}'_i$ be such that there exist $\vm'$ and $\vm''$ such that
  $f^\perp(\vt, \vm', \vm'') = 1$. Moreover, let $\vt \in \code{C}$ and let
  $\vm$ be such that $f(\vt, \vm) = 1$. Then,
  \begin{align*}
    \innerprod{\vt}{\vt'}
      &= \sum_{i=1}^{\nset{T}}
           \innerprod{t_i}{t'_i} \\
      &= \sum_{i=1}^{\nset{T}}
           \innerprod{t_i}{t'_i}
         +
         \sum_{i=1}^{\nset{M}}
           \innerprod{m_i}{m'_i}
         -
         \sum_{i=1}^{\nset{M}}
           \innerprod{m_i}{m'_i} \\
      &= \sum_{i=1}^{\nset{T}}
           \innerprod{t_i}{t'_i}
         +
         \sum_{i=1}^{\nset{M}}
           \innerprod{m_i}{m'_i}
         +
         \sum_{i=1}^{\nset{M}}
           \innerprod{m_i}{m''_i} \\
      &= \big(
           \innerprod{t_1}{t'_1}
           +
           \innerprod{m_1}{m'_1}
           + 
           \innerprod{m_3}{m''_3}
         \big) + \\
      &\quad\ 
         \big(
           \innerprod{t_2}{t'_2}
           +
           \innerprod{m_1}{m''_1}
           + 
           \innerprod{m_2}{m'_2}
         \big) + \\
      &\quad\ 
         \big(
           \innerprod{t_3}{t'_3}
           +
           \innerprod{t_4}{t'_4}
           +
           \innerprod{m_2}{m''_2}
           + 
           \innerprod{m_3}{m'_3}
         \big) \\
       &\overset{(*)}{=}
          0 + 0 + 0
        = 0 \; .
  \end{align*}
  Here, step $(*)$ follows from the fact that $(t_1, m_1, m_3) \in \code{C}_1$
  and $(t'_1, m'_1, m''_3) \in \code{C}^\perps_1$ imply that
  $\innerprod{t_1}{t'_1} + \innerprod{m_1}{m'_1} + \innerprod{m_3}{m''_3} =
  0$, with similar expressions for the other subcodes.

  We see that $\vt'$ is orthogonal to $\vt$, and because $\vt \in \code{C}$
  was arbitrary, $\vt'$ must be in $\code{C}^\perps$.
\end{proof}

\begin{Assumption}
  \label{ass:characteristic:two:1}

  For the rest of the paper we will assume that $p = 2$, which implies that
  the groups $\set{T}_i$, $\set{T}'_i$, $i = 1, \ldots, \nset{T}$, and the
  groups $\set{M}_i$, $\set{M}'_i$, $\set{M}''_i$, $i = 1, \ldots, \nset{M}$,
  have characteristic $2$. Therefore, $m''_i = - m''_i$ for all $m''_i \in
  \set{M}''_i$, $i = 1, \ldots, \nset{M}$, etc..
\end{Assumption}

So, given that we are only interested in arguments of $f^\perp$ that lead to
non-zero function values, any valid configuration $(\vt', \vm', \vm'')$ of the
FFG in Figure~\ref{fig:simple:ffg:1}~(middle) fulfills $\vm' = - \vm'' =
\vm''$. This observation allows us to simplify the function $f^\perp$ to
$f^\perp: \ \prod_{i=1}^{\nset{T}} \set{T}'_i \times \prod_{i=1}^{\nset{M}}
\set{M}'_i \ \to \ \mathbb{R}$ that represents the mapping
\begin{align*}
    &(\vt', \vm')
      \mapsto   
       f^\perp_1(t'_1, m'_1, m'_3)
       f^\perp_2(t'_2, m'_1, m'_2)
       f^\perp_3(t'_3, t'_4, m'_2, m'_3)
\end{align*}
with
\begin{align*}
  f^\perp_1(t'_1, m'_1, m'_3)
    &\defeq
       \big[
         (t'_1, m'_1, m'_3) \in \code{C}^\perps_1
       \big] \; , \\
  f^\perp_2(t'_2, m'_1, m'_2)
    &\defeq
       \big[
         (t'_2, m'_1, m'_2) \in \code{C}^\perps_2
       \big] \; , \\
  f^\perp_3(t'_3, t'_4, m'_2, m'_3)
    &\defeq
       \big[
         (t'_3, t'_4, m'_2, m'_3) \in \code{C}^\perps_3
       \big] \; .
\end{align*}
The new function $f^\perp$ is depicted by the FFG in
Figure~\ref{fig:simple:ffg:1}~(right). It is clear that
Theorem~\ref{th:code:duality:1} simplifies to the following corollary.

\begin{Corollary}
  \label{cor:code:duality:2}

  With the above definitions and Assumption~\ref{ass:characteristic:two:1} we
  have
  \begin{align*}
    \code{C}^\perps 
      &= \left\{
           \left.
             \vt' \in \prod_{i=1}^{\nset{T}} \set{T}'_i
           \ \right| \
           \begin{array}{c}
             \text{there exists an $\vm'$} \\
             \text{such that $f^\perp(\vt', \vm') = 1$}
           \end{array}
         \right\} \; .
  \end{align*}
\end{Corollary}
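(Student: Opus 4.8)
The plan is to derive the statement directly from Theorem~\ref{th:code:duality:1} by using Assumption~\ref{ass:characteristic:two:1} to eliminate the variables $\vm''$. Recall that Theorem~\ref{th:code:duality:1} characterizes $\code{C}^\perps$ via the \emph{unsimplified} function $f^\perp(\vt', \vm', \vm'')$, whose factorization includes the three sign-flip indicators $[m'_1 = -m''_1]$, $[m'_2 = -m''_2]$, and $[m'_3 = -m''_3]$. First I would observe that under Assumption~\ref{ass:characteristic:two:1} the alphabet groups have characteristic $2$, so that $-m''_i = m''_i$ for every $m''_i \in \set{M}''_i$; hence each sign-flip indicator collapses to the equality indicator $[m'_i = m''_i]$.

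From this collapse it follows that a triple $(\vt', \vm', \vm'')$ can yield $f^\perp(\vt', \vm', \vm'') = 1$ only if $\vm'' = \vm'$, since any configuration with $\vm'' \neq \vm'$ is killed by one of the equality indicators. The key step is then to establish the equivalence between the two existence statements: there exist $\vm'$ and $\vm''$ with the unsimplified $f^\perp(\vt', \vm', \vm'') = 1$ if and only if there exists $\vm'$ with the simplified $f^\perp(\vt', \vm') = 1$. For the forward direction, the preceding observation forces $\vm'' = \vm'$, and substituting $m''_i = m'_i$ into the dual-subcode factors $f^\perp_1(t'_1, m'_1, m''_3)$, $f^\perp_2(t'_2, m''_1, m'_2)$, $f^\perp_3(t'_3, t'_4, m''_2, m'_3)$ turns them into exactly the simplified factors $f^\perp_1(t'_1, m'_1, m'_3)$, $f^\perp_2(t'_2, m'_1, m'_2)$, $f^\perp_3(t'_3, t'_4, m'_2, m'_3)$, while the three equality indicators contribute a factor of $1$; thus the simplified $f^\perp(\vt', \vm')$ equals $1$. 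For the converse, given an $\vm'$ with the simplified $f^\perp(\vt', \vm') = 1$, I would set $\vm'' \defeq \vm'$, which satisfies all three equality indicators and, by the same substitution read backwards, makes the unsimplified $f^\perp(\vt', \vm', \vm'') = 1$.

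Combining this equivalence with Theorem~\ref{th:code:duality:1} yields the claimed characterization of $\code{C}^\perps$ in terms of the simplified $f^\perp$. There is no real obstacle here: the argument is a routine variable-elimination enabled by the characteristic-$2$ identity $-x = x$. The only point requiring a moment of care is checking that the substitution $\vm'' = \vm'$ correctly aligns the arguments across the three subcodes—e.g., that the $m''_3$ appearing in $f^\perp_1$ and the $m'_3$ appearing in $f^\perp_3$ both become $m'_3$—but this follows immediately from the edge-labeling conventions of Figure~\ref{fig:simple:ffg:1}, in which each primed message variable is paired with its double-primed counterpart across precisely the two local functions sharing that edge.
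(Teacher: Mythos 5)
Your proposal is correct and follows essentially the same route as the paper: the paper's own proof is simply ``follows easily from Theorem~\ref{th:code:duality:1},'' with the supporting argument given in the text preceding the corollary, where the characteristic-$2$ identity forces $\vm' = -\vm'' = \vm''$ in any valid configuration and thereby justifies eliminating $\vm''$ to obtain the simplified $f^\perp$. You have merely spelled out both directions of this variable-elimination explicitly, which the paper leaves implicit.
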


\begin{proof}
  Follows easily from Theorem~\ref{th:code:duality:1}.
\end{proof}

We conclude this section with a definition that will be crucial for the
remainder of this paper, namely self-orthogonality and self-duality (under the
symplectic inner product) of a code.

\begin{Definition}
  \label{def:self:orthogonal:code:1}

  Let $\code{C}$ be a group code with dual code $\code{C}^\perp$. Then,
  \begin{itemize}

  \item $\code{C}$ is called self-orthogonal if $\code{C} \subseteq
    \code{C}^{\perps}$ and

  \item $\code{C}$ is called self-dual if $\code{C} = \code{C}^\perps$.

  \end{itemize}
  (Note that a code $\code{C}$ is self-orthogonal if $\innerprod{\vt}{\vt'} =
  0$ for all $\vt, \vt' \in \code{C}$.)
\end{Definition}

\section{Stabilizer Label Codes and \\
             Normalizer Label Codes}
\label{sec:stabilizer:and:normalizer:label:codes:1}

Let $\code{C}$ be a code over $\Inttwo^2$ that is self-orthogonal under the
symplectic inner product. Without going into the details of the stabilizer
QECC framework, such a code $\code{C}$ can be used to construct a stabilizer
QECC. In that context, the codes $\code{C}$ and $\code{C}^\perp$ are called,
respectively, the stabilizer label code and the normalizer label code
associated with that stabilizer QECC.

\begin{Proposition}
  \label{prop:ffg:self:orthogonal:code:1}

  Using the notation that has been introduced so far, in particular
  Definition~\ref{def:qsc:ring:def:1} and
  Assumption~\ref{ass:characteristic:two:1}, let $\code{C} \subseteq
  \prod_{i=1}^{\nset{T}} \set{T}_i$ be a group code whose indicator function
  is defined by an FFG containing half-edges $T_i$, $i = 1, \ldots, \nset{T}$,
  full edges $M_i$, $i = 1, \ldots, \nset{M}$, and function nodes $f_i$, $i =
  1, \ldots, \nset{F}$, where the latter are indicator functions of group
  codes $\code{C}_i$, $i = 1, \ldots, \nset{F}$. Then,
  \begin{itemize}
  
  \item $\code{C}$ is self-orthogonal if all $\code{C}_i$ are self-orthogonal,
    and

  \item $\code{C}$ is self-dual if all $\code{C}_i$ are self-dual.

  \end{itemize}
\end{Proposition}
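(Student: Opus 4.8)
The plan is to exploit Corollary~\ref{cor:code:duality:2}, whose crucial feature is that the FFG it furnishes for $\code{C}^\perps$ has exactly the same graph topology as the given FFG for $\code{C}$. Concretely, the dual FFG has the same half-edges $T_i$, the same full edges $M_i$, and in place of each function node $f_i$ it carries the node $f_i^\perp$, the indicator function of the local dual code $\code{C}_i^\perps$ (which is well-defined since each $\code{C}_i$ is a group code). Thus the two FFGs differ only in that each local code $\code{C}_i$ has been replaced by $\code{C}_i^\perps$; the wiring and the node-variable incidences are identical. Under the identifications $\set{T}'_i = \set{T}_i$ and $\set{M}'_i = \set{M}_i$ from Definition~\ref{def:qsc:ring:def:1}, a configuration of the primal FFG can be read verbatim as a configuration of the dual FFG. (Although Corollary~\ref{cor:code:duality:2} is stated for the running example, both the construction and this observation generalize to any FFG of the form in the proposition.)

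First I would prove the self-orthogonality claim. Assume every $\code{C}_i$ is self-orthogonal, so $\code{C}_i \subseteq \code{C}_i^\perps$. Take an arbitrary $\vt \in \code{C}$ together with a witness $\vm$, so the configuration $(\vt, \vm)$ satisfies every local constraint, i.e.\ the relevant subtuple lies in $\code{C}_i$ for each $i$. Since $\code{C}_i \subseteq \code{C}_i^\perps$, that same subtuple lies in $\code{C}_i^\perps$, so $(\vt, \vm)$ is also a valid configuration of the dual FFG. By Corollary~\ref{cor:code:duality:2} this witnesses $\vt \in \code{C}^\perps$. As $\vt$ was arbitrary, $\code{C} \subseteq \code{C}^\perps$, which is exactly self-orthogonality.

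Next I would settle the self-duality claim. If every $\code{C}_i$ is self-dual, then $\code{C}_i^\perps = \code{C}_i$ and hence $f_i^\perp = f_i$ for each $i$, so the dual FFG coincides with the primal FFG node for node and edge for edge. The primal FFG represents $\code{C}$, while by Corollary~\ref{cor:code:duality:2} the dual FFG represents $\code{C}^\perps$; as the two FFGs are literally the same, they represent the same code, so $\code{C} = \code{C}^\perps$, i.e.\ $\code{C}$ is self-dual.

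The only point needing care is the structural assertion of the first paragraph --- that dualizing leaves the topology intact and merely swaps each local code for its symplectic dual --- since once the global valid configurations of the two FFGs are compared through the shared witness $\vm$, both claims follow immediately. I anticipate no genuine obstacle beyond spelling out this topological correspondence in the general setting, a step the text has already flagged as a routine generalization of the example.
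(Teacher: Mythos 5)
Your proposal is correct and follows essentially the same route as the paper: it rests on Corollary~\ref{cor:code:duality:2} together with the topological identity of the primal and dual FFGs, and reads each primal witness $\vm$ verbatim as a witness in the dual FFG, exactly as in the paper's proof of the self-orthogonality bullet. The only difference is cosmetic and concerns the self-dual case: you observe that $f_i^\perp = f_i$ makes the dual FFG literally coincide with the primal one and conclude $\code{C} = \code{C}^\perps$ in one step from the equality in Corollary~\ref{cor:code:duality:2}, whereas the paper obtains the reverse inclusion $\code{C}^\perps \subseteq \code{C}$ by swapping the roles of $\code{C}$ and $\code{C}^\perps$ --- both variants ultimately require the full (both-direction) duality statement of that corollary.
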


\begin{proof}
  First we consider the case where all $\code{C}_i$ are self-orthogonal. The
  code $\code{C}$ can be represented by an FFG like the FFG in
  Figure~\ref{fig:simple:ffg:1}~(left). Let $\vt$ be a codeword in $\code{C}$
  and let $\vm$ be such that $f(\vt, \vm) = 1$. Because of
  Definition~\ref{def:qsc:ring:def:1},
  Assumption~\ref{ass:characteristic:two:1}, and
  Corollary~\ref{cor:code:duality:2}, its dual code $\code{C}^\perps$ can be
  represented by an FFG like the FFG in Figure~\ref{fig:simple:ffg:1}~(right).
  Then, because the FFG in Figure~\ref{fig:simple:ffg:1}~(left) is
  topologically equivalent to the FFG in Figure~\ref{fig:simple:ffg:1}~(right)
  and because all $\code{C}_i$ are self-orthogonal, it follows that
  $f^\perp(\vt, \vm) = 1$, which in turn yields $\vt \in \code{C}^\perps$.
  Finally, because $\vt \in \code{C}$ was arbitrary, we see that $\code{C}
  \subseteq \code{C}^\perps$, i.e., that $\code{C}$ is self-orthogonal.

  Secondly, we consider the case where all $\code{C}_i$ are self-dual.
  Similarly to the above argument, we can show that $\code{C} \subseteq
  \code{C}^\perps$. Reversing the roles of $\code{C}$ and $\code{C}^\perps$, we
  can also show that $\code{C}^\perps \subseteq \code{C}$. This proves that
  $\code{C} = \code{C}^\perps$, i.e., that $\code{C}$ is self-dual.
\end{proof}

Obviously, Proposition~\ref{prop:ffg:self:orthogonal:code:1} gives us a simple
tool to construct stabilizer label codes and normalizer label codes. It does
not seem that duality results for FFGs, which are at the heart of
Proposition~\ref{prop:ffg:self:orthogonal:code:1}, have been leveraged before
to construct stabilizer QECCs.\footnote{While preparing this paper, we became
  aware of the recent paper~\cite{Wang:Wang:Du:Zeng:08:1} which also uses
  factor graphs and some type of duality results in the context of stabilizer
  QECCs. However, that paper does not give enough details for one to be able
  to judge its merits towards constructing stabilizer QECCs.}

\subsection{CSS Codes}

CSS codes are a family of stabilizer QECCs named after Calderbank, Shor, and
Steane (see e.g.~\cite{Nielsen:Chuang:00:1}). For these codes we will not use
our formalism, however, later on CSS codes can be used as component codes for
longer codes.

Let $\code{B}_1 \subseteq \Inttwo^n$ and $\code{B}_2 \subseteq \Inttwo^n$ be
two binary codes of length $n$ such that $\vb \cdot (\vb')^\tr = 0$ for all
$\vb \in \code{B}_1$ and $\vb' \in \code{B}_2$. Based on these two binary
codes, we define the stabilizer label code
\begin{align*}
  \code{C}
    &\defeq 
       \big\{
           \vt = (\vt_X, \vt_Z)
         \ \big| \
           \vt_X \in \code{B}_1 \ \text{and} \ 
           \vt_Z \in \code{B}_2
         \big.
       \big\}. 
\end{align*}
It can easily be seen that the code $\code{C}$ is self-orthogonal. Namely, for
any $\vt = (\vt_X, \vt_Z), \ \vt' = (\vt'_X, \vt'_Z) \in \code{C}$ we have
$\innerprod{\vt}{\vt'} = \vt_X \cdot (\vt'_Z)^\tr + \vt_Z \cdot (\vt'_X)^\tr =
0 + 0 = 0$, where $\vt_X \cdot (\vt'_Z)^\tr = 0$ follows from $\vt_X \in
\code{B}_1$ and $\vt'_Z \in \code{B}_2$, and where $\vt_Z \cdot (\vt'_X)^\tr =
\vt'_X \cdot (\vt_Z)^\tr = 0$ follows from $\vt'_X \in \code{B}_1$ and $\vt_Z
\in \code{B}_2$.

\begin{Example}
  \label{ex:steane:code:1}

  The so-called seven qubit Steane stabilizer QECC (see
  e.g.~\cite{Nielsen:Chuang:00:1}) is a CSS code where both $\code{B}_1$ and
  $\code{B}_2$ equal the $[7,3,4]$ binary simplex code, i.e., the code given
  by the rowspan of the matrix
  \begin{align*}
         \begin{bmatrix}
           0 & 0 & 0 & 1 & 1 & 1 & 1 \\
           0 & 1 & 1 & 0 & 0 & 1 & 1 \\
           1 & 0 & 1 & 0 & 1 & 0 & 1
         \end{bmatrix} \; .
  \end{align*}
\end{Example}

\subsection{Codes over $\GFfour$}

Let us associate with any vector $\vt = (\vt_X, \vt_Z) \in (\Inttwo^n)^2$ the
vector $\vt_{\GFfour} = (t_{\GFfour, 1}, \ldots, t_{\GFfour, n}) \in
\GFfour^n$ through the mapping\footnote{The definition of $\GFfour$ was given
  at the end of Section~\ref{sec:introduction:1}.}
\begin{align*}
  \vt_{\GFfour}
    &\defeq
       \gamma_{\Inttwo^2\to\GFfour}(\vt)
     \defeq
       \omega \, \vt_X
       +
       \oomega \, \vt_Z \; .
\end{align*}
Clearly, the mapping $\gamma_{\Inttwo^2\to\GFfour}$ is injective and
surjective and therefore bijective, and so there is a bijective mapping
between $\code{C}$ and $\code{C}_{\GFfour} \defeq
\gamma_{\Inttwo^2\to\GFfour}(\code{C})$, the latter being the image of
$\code{C}$ under the mapping $\gamma_{\Inttwo^2\to\GFfour}$. Because
$\code{C}$ was assumed to be a group/additive code, $\code{C}_{\GFfour}$ is
also a group/additive code.  Moreover, if for any codeword $\vt_{\GFfour} \in
\code{C}_{\GFfour}$ it holds that $\omega \cdot \vt_{\GFfour} \in
\code{C}_{\GFfour}$, then the code $\code{C}_{\GFfour}$ is a linear code,
i.e., not only is the sum of two codewords again a codeword, but any
$\GFfour$-multiple of a codewords is also a codeword. If $\code{C}_{\GFfour}$
is a linear code then also $\code{C}^\perps_{\GFfour}$ is a linear code. With
this, Proposition~\ref{prop:ffg:self:orthogonal:code:1} can be suitably
reformulated for sub-codes $\code{C}_{\GFfour, i} \defeq
\gamma_{\Inttwo^2\to\GFfour}(\code{C}_i)$ that are linear codes over
$\GFfour$, whereby one can show that the symplectic inner product can be
replaced by the Hermitian inner
product~\cite{Calderbank:Rains:Shor:Sloane:98:1}; we leave the details to the
reader. Note that a necessary condition for the code $\code{C}_{\GFfour}$ to
be linear is that $|\code{C}_{\GFfour}|$ is a power of $4$, i.e., that also
$|\code{C}|$ is a power of $4$.

\begin{Example}
  \label{ex:steane:code:2}

  The so-called five qubit stabilizer QECC (see e.g.
  \cite{Nielsen:Chuang:00:1}) has a stabilizer label code $\code{C}_{\GFfour}$
  that is the $\Inttwo$-rowspan of
  \begin{align*}
    \begin{bmatrix}
      \omega  & \oomega & \oomega & \omega  & 0 \\
      0       & \omega  & \oomega & \oomega & \omega \\
      \omega  & 0       & \omega  & \oomega & \oomega \\
      \oomega & \omega & 0       & \omega  & \oomega
    \end{bmatrix},
  \end{align*}
  It can easily be checked that $\code{C}_{\GFfour}$ is a linear code, which
  allows one to represent it as the $\GFfour$-rowspan of
  \begin{align*}
    \begin{bmatrix}
      \omega & \oomega & \oomega & \omega  & 0 \\
      0      & \omega  & \oomega & \oomega & \omega
    \end{bmatrix}.
  \end{align*}
\end{Example}

\section{Examples}
\label{sec:examples:1}

In this section we show how Proposition~\ref{prop:ffg:self:orthogonal:code:1}
can be leveraged to construct stabilizer label codes, in particular how that
proposition unifies several earlier proposed stabilizer label code
constructions. (For more details about the discussed codes we refer to the
corresponding papers.)

\begin{figure}
  \begin{center}
    \epsfig{file=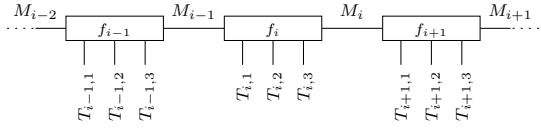, width=0.8\linewidth}
  \end{center}
  \caption{FFG for the convolutional stabilizer label codes in
    Exs.~\ref{ex:conv:stab:qecc:1} and~\ref{ex:conv:stab:qecc:2}.}
  \label{fig:ffg:example:forney:grassl:guha:1}
\end{figure}

\begin{Example}
  \label{ex:conv:stab:qecc:1}

  \textbf{(Convolutional
    Stab.~QECC~\cite[Example~1]{Forney:Grassl:Guha:07:1})} With the help of
  our FFG framework, the stabilizer label code
  of~\cite[Example~1]{Forney:Grassl:Guha:07:1} can be seen to be given by the
  FFG in Figure~\ref{fig:ffg:example:forney:grassl:guha:1}, where, using the
  notation from Definition~\ref{def:qsc:ring:def:1}, $\mu_i = 1$ for all $i$.
  For all $i$, the local function $f_i$ is given by
  \begin{align*}
    f_i(m_{i-1}, t_{i,1}, t_{i,2}, t_{i,3}, m_i)
      &\defeq
         \big[
           (m_{i-1}, t_{i,1}, t_{i,2}, t_{i,3}, m_i) \in \code{C}_i
         \big]
  \end{align*}
  with $\code{C}_i$ such that $\code{C}_{\GFfour, i} \defeq
  \gamma_{\Inttwo^2\to\GFfour}(\code{C}_i)$ is a linear code that is the
  $\GFfour$-rowspan of the matrix
  \begin{align*}
    \left[
      \begin{array}{c|ccc|c}
        0 & 1 & 1      & 1       & 1 \\
        1 & 1 & \omega & \oomega & 0
      \end{array}
    \right] \; .
  \end{align*}
  (In order to obtain a block code one needs to terminate the FFG in
  Figure~\ref{fig:ffg:example:forney:grassl:guha:1} on both sides; we omit the
  discussion of this issue. Alternatively, tail-biting can be used.)
\end{Example}

\begin{Example}
  \label{ex:conv:stab:qecc:2}

  \textbf{(Convolutional
    Stab.~QECC~\cite[Example~3]{Forney:Grassl:Guha:07:1})} Similarly, we can
  represent the stabilizer label code
  of~\cite[Example~3]{Forney:Grassl:Guha:07:1} by the FFG in
  Figure~\ref{fig:ffg:example:forney:grassl:guha:1}. Here, however, we have
  $\mu_i = 2$ for all $i$, and $\code{C}_i$ is such that $\code{C}_{\GFfour,
    i} \defeq \gamma_{\Inttwo^2\to\GFfour}(\code{C}_i)$ is a linear code given
  by the $\GFfour$-rowspan of
  \begin{align*}
    \left[
      \begin{array}{cc|ccc|cc}
        0 & 0 & 1 & 1 & 1 & 0 & 1 \\
        0 & 1 & 1 & 0 & 0 & 1 & 1 \\
        1 & 1 & 1 & 1 & 0 & 0 & 0
      \end{array}
    \right] \; .
  \end{align*}
  Note that a ``more common'' choice for a matrix whose $\GFfour$-rowspan is
  the trellis section code of a non-recursive convolutional code would have
  been a matrix like
  \begin{align*}
    \left[
      \begin{array}{cc|ccc|cc}
        0 & 0 & 1 & 1 & 1 & 0 & 1 \\
        0 & 1 & 1 & 0 & 0 & 1 & 0 \\
        1 & 0 & 1 & 1 & 0 & 0 & 0
      \end{array}
    \right] \; .
  \end{align*}
  Here the rows are such that the last $\mu_i{-}1$ components of $m_{i-1}$
  equal the first $\mu_i{-}1$ components of $m_i$. However, the $\GFfour$-span
  of such a matrix does not result in a self-orthogonal $\code{C}_i$.
\end{Example}

\begin{Example}
  \label{ex:serial:turbo:stab:qecc:1}

  \textbf{(Serial Turbo
    Stab.~QECC~\cite[Figure~10]{Poulin:Tillich:Ollivier:07:1:subm})} The
  paper~\cite{Poulin:Tillich:Ollivier:07:1:subm} discusses constructions of
  serial turbo stabilizer label codes. In particular, Figure~10
  in~\cite{Poulin:Tillich:Ollivier:07:1:subm} presents a code that corresponds
  to the FFG shown in
  Figure~\ref{fig:ffg:example:poulin:tillich:ollivier:1}~(left). Here,
  $f_{\mathrm{convcode}1}$, $f_{\mathrm{quantum-interleaver}}$, and
  $f_{\mathrm{convcode}2}$ represent, respectively, the indicator functions of
  the first convolutional code, of the quantum interleaver, and of the second
  convolutional code. If the indicator functions correspond to self-orthogonal
  codes then we can apply our FFG framework and guarantee that the overall
  code is self-orthogonal. This is indeed the case for the codes presented
  in~\cite{Poulin:Tillich:Ollivier:07:1:subm}.

  \begin{figure}
    \begin{center}
      \epsfig{file=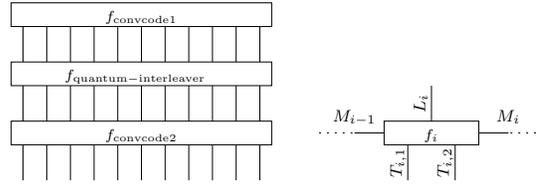, width=0.79\linewidth}
    \end{center}
    \caption{Left: FFG for the serial turbo stabilizer label code in
      Example~\ref{ex:serial:turbo:stab:qecc:1}. Right: FFG for the second
      convolutional code on the left-hand side.}
    \label{fig:ffg:example:poulin:tillich:ollivier:1}
  \end{figure}

  A particular example of a stabilizer label code that can be used for the
  function node $f_{\mathrm{convcode}2}$ is given in~\cite[Figures~8
  and~9]{Poulin:Tillich:Ollivier:07:1:subm} and shown as an FFG in
  Figure~\ref{fig:ffg:example:poulin:tillich:ollivier:1}~(right). (In contrast
  to~\cite[Figures~8 and~9]{Poulin:Tillich:Ollivier:07:1:subm}, that uses the
  variable names $P_{i,1}$ and $P_{i,2}$, we are using the variable names
  $T_{i,1}$ and $T_{i,2}$, respectively.) Here, for all $i$ the indicator
  function $f_i$ corresponds to a code $\code{C}_i$ that is the rowspan of
  \begin{align*}
    \left[
      \begin{array}{ccccc|ccccc}
        1 & 0 & 1 & 1 & 1 &  0 & 0 & 0 & 0 & 0 \\     
        0 & 1 & 0 & 1 & 1 &  0 & 0 & 0 & 0 & 0 \\     
        0 & 0 & 0 & 0 & 0 &  1 & 0 & 1 & 0 & 0 \\     
        0 & 0 & 0 & 0 & 0 &  0 & 1 & 1 & 1 & 0 \\     
        0 & 0 & 0 & 0 & 0 &  0 & 0 & 0 & 1 & 1     
      \end{array}
    \right] \; ,
  \end{align*}
  where the columns correspond to $m_{X, i-1}$, $l_{X,i}$, $t_{X,i,1}$,
  $t_{X,i,2}$, $m_{X,i}$, $m_{Z,i-1}$, $l_{Z,i}$, $t_{Z,i,1}$, $t_{Z,i,2}$,
  $m_{Z,i}$, respectively. Note that $\code{C}_i$ is self-dual under the
  symplectic inner product and that the corresponding $\GFfour$-code
  $\code{C}_{\GFfour, i} \defeq \gamma_{\Inttwo^2\to\GFfour}(\code{C}_i)$ is
  additive but not linear. (It cannot be linear since the number of codewords
  is $32$, which is not a power of $4$.)
  %XXXXX Comment on recursiveness / catastrophicity.
\end{Example}

\begin{Example}
  \label{ex:stab:state:1}

  \textbf{(Stabilizer State~\cite[Example~1]{VandenNest:Dehaene:DeMoor:04:1})}
  Roughly speaking, a stabilizer state corresponds to a stabilizer QECC whose
  stabilizer label code is self-dual, see \cite{Schlingemann:Werner:01:1,
    Schlingemann:01:1, VandenNest:Dehaene:DeMoor:04:1,
    Cross:Smith:Smolin:Zeng;07:1:subm}. Let $\matr{A}$ be the $n \times n$
  adjacency matrix of any graph with $n$ vertices and let $\code{C}$ be the
  rowspan of $\bigl[ \, \mathbb{1}_n \, \bigl| \, \matr{A} \, \bigr.  \bigr]$,
  where the columns correspond to $t_{X,1}, \ldots, t_{X,n}$, $t_{Z,1},
  \ldots, t_{Z,n}$, and where $\mathbb{1}_n$ is the $n \times n$ identity
  matrix. It can easily be checked that $\code{C}$ is self-dual under the
  symplectic inner product. (Note that $\matr{A}^\tr = \matr{A}$ because
  $\matr{A}$ is the adjacency matrix of a graph.)

  \begin{figure}
    \begin{center}
      \epsfig{file=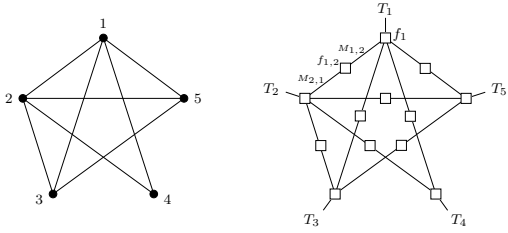, width=0.75\linewidth}
    \end{center}
    \caption{Left: graph with five vertices that defines a stabilizer state.
      Right: FFG of the corresponding stabilizer label code.}
    \label{fig:ffg:example:vandennest:dehaene:demoor:1}
  \end{figure}

  For example, the graph in
  Figure~\ref{fig:ffg:example:vandennest:dehaene:demoor:1}~(left) results in a
  stabilizer label code $\code{C}$ which is the rowspan of
  \begin{align*}
    \left[
      \begin{array}{ccccc|ccccc}
        1 & 0 & 0 & 0 & 0 &  0 & 1 & 1 & 1 & 1 \\     
        0 & 1 & 0 & 0 & 0 &  1 & 0 & 1 & 1 & 1 \\     
        0 & 0 & 1 & 0 & 0 &  1 & 1 & 0 & 0 & 1 \\     
        0 & 0 & 0 & 1 & 0 &  1 & 1 & 0 & 0 & 0 \\     
        0 & 0 & 0 & 0 & 1 &  1 & 1 & 1 & 0 & 0     
      \end{array}
    \right] \; .
  \end{align*}

  It turns out that any such stabilizer label code can also be represented by
  an FFG that is topographically closely related to the graph that defined the
  code. E.g., Figure~\ref{fig:ffg:example:vandennest:dehaene:demoor:1}~(right)
  shows the FFG that corresponds to the example graph in
  Figure~\ref{fig:ffg:example:vandennest:dehaene:demoor:1}~(left). Here, $f_1$
  is the indicator function of the self-dual code $\code{C}_1$ which is
  defined to be the rowspan of
  \begin{align*}
    \left[
      \begin{array}{ccccc|ccccc}
        1 & 0 & 0 & 0 & 0 &  0 & 1 & 1 & 1 & 1 \\     
        0 & 1 & 0 & 0 & 0 &  1 & 0 & 0 & 0 & 0 \\     
        0 & 0 & 1 & 0 & 0 &  1 & 0 & 0 & 0 & 0 \\     
        0 & 0 & 0 & 1 & 0 &  1 & 0 & 0 & 0 & 0 \\     
        0 & 0 & 0 & 0 & 1 &  1 & 0 & 0 & 0 & 0     
      \end{array}
    \right] \; ,
  \end{align*}
  where the columns correspond to $t_{X,1}$, $m_{X,1,2}$, $m_{X,1,3}$,
  $m_{X,1,4}$, $m_{X,1,5}$, $t_{Z,1}$, $m_{Z,1,2}$, $m_{Z,1,3}$, $m_{Z,1,4}$,
  $m_{Z,1,5}$, respectively. Moreover, $f_{1,2}$ is the indicator function of
  the self-dual code $\code{C}_{1,2}$ which is defined to be the rowspan of
  $\left[ \begin{array}{cc|cc} 1 & 0 & 0 & 1 \\ 0 & 1 & 1 & 0 \end{array}
  \right]$, where the columns correspond to $m_{X,1,2}$, $m_{X,2,1}$,
  $m_{Z,1,2}$, $m_{Z,2,1}$, respectively. The other indicator functions $f_i$
  and $f_{i,j}$ and self-dual codes $\code{C}_i$ and $\code{C}_{i,j}$ are
  defined analogously. Note that the code $\code{C}_i$ depends on the number
  of vertices that are adjacent to vertex $i$. However, the code
  $\code{C}_{i,j}$ is always the same for any pair $(i,j)$ of adjacent
  vertices.
\end{Example}

\begin{Example}
  \label{ex:pg:code:1}

  \textbf{(LDPC codes)} Any LDPC code whose parity-check matrix contains
  orthogonal rows can be used to construct a normalizer label code
  $\code{C}^\perp$, see e.g.~\cite{MacKay:Mitchison:McFadden:04:1,
    Aly:07:1:subm, Djordjevic:08:1} and references therein. In terms of FFGs,
  such LDPC codes are expressed with the help of equal and single-parity-check
  function nodes. The FFG of the corresponding stabilizer label code
  $\code{C}$ can also be expressed in terms of equal and single-parity-check
  function nodes. Because single-parity-checks of length not equal to $2$ do
  \emph{not} represent self-orthogonal codes, our FFG framework is not
  directly applicable to construct FFGs of such LDPC codes. However, with the
  help of some auxiliary code constructions, our framework can also be used to
  construct LDPC stabilizer/normalizer label codes; because of space
  constraints we do not give the details here.
\end{Example}

We leave it as an open problem to use our FFG framework to construct other
classes of stabilizer label codes that have interesting properties.

\section{Message-Passing Iterative and \\
             Linear Programming Decoding}
\label{sec:mpi:and:lp:decoding:1}

One of the main interests in studying FFGs for stabilizer label codes and
their duals is that one would like to have FFGs that are suitable for MPI
decoding. (Note that the code that is relevant for decoding in the stabilizer
QECC framework is the code $\code{C}^\perp$, or the coset code $\code{C}^\perp
/ \code{C}$, see e.g.~the comments
in~\cite{Poulin:Tillich:Ollivier:07:1:subm}.) The well-known trade-offs from
classical LDPC and turbo codes apply also here: good codes with low FFG
variable and function node complexity must have cycles, yet cycles lead to
sub-optimal performance of message-passing iterative decoders. We leave it as
an open problem to study stopping sets, trapping sets, absorbing sets,
near-codewords, pseudo-codewords, the fundamental polytope, etc.\ (see e.g.\
the refs.\ at \cite{Pseudocodewords:Website:05:1}) for the codes that were
discussed in this paper. Moreover, one can formulate alternative decoders to
MPI decoders like linear programming (LP) decoding. It would be interesting to
see if the self-orthogonality property of stabilizer label codes leads to
further insights in the context of MPI and LP decoders, in particular by also
leveraging other duality results for FFGs like Fourier
duality~\cite{Mao:Kschischang:05:1} and Lagrange
duality~\cite{Vontobel:Loeliger:02:2}.


\begin{thebibliography}{10}
\providecommand{\url}[1]{#1}
\csname url@rmstyle\endcsname
\providecommand{\newblock}{\relax}
\providecommand{\bibinfo}[2]{#2}
\providecommand\BIBentrySTDinterwordspacing{\spaceskip=0pt\relax}
\providecommand\BIBentryALTinterwordstretchfactor{4}
\providecommand\BIBentryALTinterwordspacing{\spaceskip=\fontdimen2\font plus
\BIBentryALTinterwordstretchfactor\fontdimen3\font minus
  \fontdimen4\font\relax}
\providecommand\BIBforeignlanguage[2]{{%
\expandafter\ifx\csname l@#1\endcsname\relax
\typeout{** WARNING: IEEEtran.bst: No hyphenation pattern has been}%
\typeout{** loaded for the language `#1'. Using the pattern for}%
\typeout{** the default language instead.}%
\else
\language=\csname l@#1\endcsname
\fi
#2}}

\bibitem{Kschischang:Frey:Loeliger:01}
F.~R. Kschischang, B.~J. Frey, and H.-A. Loeliger, ``Factor graphs and the
  sum-product algorithm,'' \emph{IEEE Trans.\ on Inform.\ Theory}, vol. IT--47,
  no.~2, pp. 498--519, Feb. 2001.

\bibitem{Nielsen:Chuang:00:1}
M.~A. Nielsen and I.~L. Chuang, \emph{Quantum Computation and Quantum
  Information}.\hskip 1em plus 0.5em minus 0.4em\relax Cambridge, UK: Cambridge
  University Press, 2000.

\bibitem{Gottesman:97:1}
D.~Gottesman, ``Stabilizer codes and quantum error correction,'' Ph.D.
  dissertation, California Institute of Technology, Pasadena, CA, USA, 1997.

\bibitem{Calderbank:Rains:Shor:Sloane:98:1}
A.~R. Calderbank, E.~M. Rains, P.~W. Shor, and N.~J.~A. Sloane, ``Quantum error
  correction via codes over {GF(4)},'' \emph{IEEE Trans.\ on Inform.\ Theory},
  vol. IT--44, no.~4, pp. 1369--1387, July 1998.

\bibitem{MacKay:Mitchison:McFadden:04:1}
D.~J.~C. MacKay, G.~Mitchison, and P.~L. McFadden, ``Sparse-graph codes for
  quantum error correction,'' \emph{IEEE Trans.\ on Inform.\ Theory}, vol.
  IT--50, no.~10, pp. 2315--2330, Oct. 2004.

\bibitem{Forney:Grassl:Guha:07:1}
G.~D. {Forney, Jr.}, M.~Grassl, and S.~Guha, ``Convolutional and tail-biting
  quantum error-correcting codes,'' \emph{IEEE Trans.\ on Inform.\ Theory},
  vol. IT--53, no.~3, pp. 865--880, Mar. 2007.

\bibitem{Poulin:Tillich:Ollivier:07:1:subm}
D.~Poulin, J.-P. Tillich, and H.~Ollivier, ``Quantum serial turbo-codes,''
  \emph{submitted, available online under
  \emph{\texttt{http://arxiv.org/abs/0712.2888}}}, Dec. 2007.

\bibitem{Forney:01:1}
G.~D. {Forney, Jr.}, ``Codes on graphs: normal realizations,'' \emph{IEEE
  Trans.\ on Inform.\ Theory}, vol. IT--47, no.~2, pp. 520--548, Feb. 2001.

\bibitem{Loeliger:04:1}
H.-A. Loeliger, ``An introduction to factor graphs,'' \emph{IEEE Sig.\ Proc.\
  Mag.}, vol.~21, no.~1, pp. 28--41, Jan. 2004.

\bibitem{Wang:Wang:Du:Zeng:08:1}
H.~Wang, J.~Wang, Q.~Du, and G.~Zeng, ``A new approach to constructing {CSS}
  codes based on factor graphs,'' \emph{Information Sciences}, vol. 178, no.~7,
  pp. 1893--1902, Apr. 2008.

\bibitem{VandenNest:Dehaene:DeMoor:04:1}
M.~{Van den Nest}, J.~Dehaene, and B.~{De Moor}, ``Graphical description of the
  action of local {Clifford} transformations on graph states,'' \emph{Phys.\
  Rev.\ A}, vol.~69, no.~2, pp. {022\,316.1}--{022\,316.7}, 2004.

\bibitem{Schlingemann:Werner:01:1}
D.~Schlingemann and R.~F. Werner, ``Quantum error-correcting codes associated
  with graphs,'' \emph{Phys.\ Rev.\ A}, vol.~65, p. 012308, Dec. 2001.

\bibitem{Schlingemann:01:1}
D.~Schlingemann, ``Stabilizer codes can be realized as graph codes,''
  \emph{Quant.\ Inf.\ Comp.}, vol.~2, no.~4, pp. 307--323, June 2001.

\bibitem{Cross:Smith:Smolin:Zeng;07:1:subm}
A.~Cross, G.~Smith, J.~A. Smolin, and B.~Zeng, ``Codeword stabilized quantum
  codes,'' \emph{submitted, available online under
  \emph{\texttt{http://arxiv.org/abs/0708.1021v1}}}, Aug. 2007.

\bibitem{Aly:07:1:subm}
S.~A. Aly, ``A class of quantum {LDPC} codes constructed from finite
  geometries,'' \emph{submitted, available online under
  \emph{\texttt{http://aps.arxiv.org/abs/0712.4115}}}, Dec. 2007.

\bibitem{Djordjevic:08:1}
I.~Djordjevic, ``Quantum {LDPC} codes from balanced incomplete block designs,''
  \emph{IEEE Comm.\ Letters}, vol.~12, no.~5, pp. 389--391, May 2008.

\bibitem{Pseudocodewords:Website:05:1}
\texttt{http://www.pseudocodewords.info}.

\bibitem{Mao:Kschischang:05:1}
Y.~Mao and F.~R. Kschischang, ``On factor graphs and the {F}ourier transform,''
  \emph{IEEE Trans.\ on Inform.\ Theory}, vol. IT--51, no.~5, pp. 1635--1649,
  2005.

\bibitem{Vontobel:Loeliger:02:2}
P.~O. Vontobel and H.-A. Loeliger, ``On factor graphs and electrical
  networks,'' in \emph{Mathematical Systems Theory in Biology, Communication,
  Computation, and Finance, IMA Volumes in Math.~\& Appl.}, D.~Gilliam and
  J.~Rosenthal, Eds.\hskip 1em plus 0.5em minus 0.4em\relax Springer Verlag,
  2003.

\end{thebibliography}
\end{document}